\numberwithin{equation}{section}
\definecolor{Bluee}{RGB}{0,0,255}
\definecolor{Magentaa}{RGB}{255,0,255}
	\theoremstyle{plain}
	\newtheorem{theorem}{Theorem}
	\numberwithin{theorem}{section}
	\newtheorem{proposition}[theorem]{Proposition}
	\theoremstyle{definition}
\renewcommand{\[}{\left[}
\newcommand\Eb{\mathds{E}}
\newcommand\Fb{\mathds{F}}
\newcommand\Pb{\mathds{P}}
\newcommand\Rb{\mathds{R}}
\newcommand\Ac{\mathscr{A}}
\newcommand\Fc{\mathscr{F}}
\newcommand\Lc{\mathscr{L}}
\newcommand\Mc{\mathscr{M}}
\newcommand\om{\omega}
\newcommand\Om{\Omega}
\newcommand\sig{\sigma}
\newcommand\gam{\gamma}
\newcommand\Gam{\Gamma}
\newcommand\lam{\lambda}
\newcommand\del{\delta}
\newcommand\Fv{\textbf{F}} 
\newcommand\Mh{\widehat{M}}
\newcommand\Mch{\widehat{\Mc}}
\newcommand\Ebt{\widetilde{\Eb}}
\newcommand\Pbt{\widetilde{\Pb}}
\newcommand\Act{\widetilde{\Ac}}
\newcommand\Pt{\widetilde{P}}
\newcommand\At{\widetilde{A}}
\newcommand\Wt{\widetilde{W}}
\newcommand\Bt{\widetilde{B}}
\newcommand\thetat{\widetilde{\theta}}
\newcommand\yt{\widetilde{y}}
\newcommand\pt{\widetilde{p}}
\newcommand\Yt{\widetilde{Y}}
\renewcommand\d{\partial}
\newcommand\ii{\mathtt{i}}
\newcommand\dd{\mathrm{d}}
\newcommand\ee{\mathrm{e}}
\newcommand{\blu}[1]{\textcolor{blue}{#1}}
\newcommand{\magenta}[1]{\textbf{\textcolor[rgb]{1,0,1}{#1}}}
\begin{document}

\title{Bond indifference prices and indifference yield curves}

\author{
Matthew Lorig
\thanks{Department of Applied Mathematics, University of Washington.  \textbf{e-mail}: \url{mlorig@uw.edu}}
}

\date{This version: \today}

\maketitle

\begin{abstract}
In a market with stochastic interest rates, we consider an investor who can either (i) invest all if his money in a savings account or (ii) purchase zero-coupon bonds and invest the remainder of his wealth in a savings account.  The indifference price of the bond is the price for which the investor could achieve the same expected utility under both scenarios.    In an affine term structure setting, under the assumption that an investor has a utility function in either exponential or power form, we show that the indifference price of a zero-coupon bond is the root of an integral expression.  As an example, we compute bond indifference prices and the corresponding indifference yield curves in the Vasicek setting and interpret the results.
\end{abstract}

%
%

\section{Introduction}
\label{sec:intro}
When the short-rate is modeled as a stochastic process, the market is incomplete and there is not a unique no-arbitrage price for a zero-coupon bond.
Utility indifference pricing (see \cite{carmona2009indifference} for an overview) provides a framework for deducing bond prices uniquely.
A number of authors have analyzed indifference prices of corporate bonds under the assumption of \textit{constant} interest rates and stochastic default intensities; see
\cite{bielecki2006indifference,sigloch2009utility,houssou2010indifference,jaimungal2012incorporating}.
By contrast, in this note, we deduce indifference prices of government bonds under the assumption of a \textit{stochastic} short-rate and no default.
Notably, unlike the vast majority of papers on indifference pricing, which present results for exponential utility only, in this note we consider both exponential and power utility.
Although government bonds are sufficiently liquid that there is no practical need to price them, our analysis is still of interest, as it provides an answer to the question: how much would a risk-averse investor be willing to pay for a bond?
\\[0.5em]
The rest of this paper proceeds as follows: in Section \ref{sec:model} we describe the dynamics of a money market account under the physical (i.e., real-world) probability measure.  We consider an investor, who wishes to maximize his expected utility of wealth at a future date.  We define the investor's indifference price for a zero-coupon bond and the corresponding indifference yield curve.  In Section \ref{sec:laplace} we compute the Laplace transform of the future value of the money market account.  Our main results appear in Section \ref{sec:indifference}, where we show that the indifference price of a bond is the root of an integral expression, assuming the investor has a utility function in either exponential form (Theorem \ref{thm:exp}) or power form (Theorem \ref{thm:pow}).  In Section \ref{sec:example} we perform a numerical study of indifference yield curves under the assumption that the short-rate has Vasicek dynamics.  Some remarks on the investor's choice of num\'eraire offered in Section \ref{sec:final-remarks}.

\section{Market model and definitions}
\label{sec:model}
Let us fix a time horizon $T < \infty$ and probability space $(\Om,\Fc,\Pb)$ with a filtration $\Fb = (\Fc_t)_{0 \geq t \leq T}$.  The measure $\Pb$ represents the physical (i.e., real-world) probability measure and the filtration $\Fb$ represents the history of the market.  Suppose an investor in this market can invest in two assets (i) a money market account, which grows at the \textit{short-rate} $R = (R_t)_{0 \leq t \leq T}$, and (ii) a \textit{$T$-maturity zero-coupon bond} $P^T=(P_t^T)_{0 \leq t \leq T}$, which pays one unit of currency at the maturity date $T$.  We wish to answer the following question: \textit{How much would this investor be willing to pay for $\nu$ zero-coupon bonds}?
\\[0.5em]
To answer this question, we suppose that the dynamics of the \textit{short-rate} $R = (R_t)_{0 \leq t \leq T}$ are of the form
\begin{align}
\dd R_t 
	&=	\mu(t,R_t) \dd t + \sig(t,R_t) \dd W_t ,	\label{eq:dR}
\end{align}
where $W$ is a $(\Pb,\Fb)$-Brownian motion.   We shall assume the $\mu$ and $\sig$ are such that the short rate belongs to the class of one-factor affine term-structure (ATS) models
\begin{align}
\mu(t,r)
	&=	\mu_0(t) + \mu_1(t) r, &
\sig^2(t,r)
	&=	\sig_0(t) + \sig_1(t) r . \label{eq:affine}
\end{align}
See \cite[Chapter 5]{filipovic2009term} for an overview.
Let $X = (X_t)_{0 \leq t \leq T}$ denote the value of a portfolio invested entirely in the money market account.  The dynamics of $X$ are
\begin{align}
\dd X_t
	&=	R_t X_t \dd t . 	\label{eq:dX}
\end{align}
Consider now, an investor who, at time $t$, has $x$ units of currency invested in the money market account and also owns $\nu$ zero-coupon bonds.
The investor's expected utility at time $T$ is
\begin{align}
V(t,x,r;T,\nu,\gam)
	&:=	\Eb_{t,x,r} U\Big( \frac{X_T + \nu P_T^T}{P_T^T} ; \gam \Big) 
	= 	\Eb_{t,x,r} U( X_T + \nu ; \gam ) ,
\end{align}
where $U$ is the investor's \textit{utility function} and $\gam$ is the investor's \textit{risk aversion}.  Observe that the investor is using the bond $P^T$ as num\'eraire.  This is a logical choice as, once purchased, the bond provides a known rate of return over the interval $[t,T]$, whereas the return of the money market account over the interval $[t,T]$ is stochastic.  For the above investor, the money market account is a riskier investment than the bond.
\\[0.5em]
Now, consider an investor who has two investment options: he can either
(i) invest all of his money in the money market account, or
(ii) purchase $\nu$ bonds for $p$ units of currency each, and invest the rest of his wealth in the money market account.
Assuming a time $t$ total wealth $x$, both options would yield the same expected utility if the price $p$ per bond satisfied
 \begin{align}
V(t,x,r;T,0,\gam)
	&=	V(t,x - \nu p,r;T,\nu,\gam) . \label{eq:p-def}
\end{align}
Thus, we define the investor's \textit{indifference price} for $\nu$ zero-coupon bonds as the solution $p \equiv p(t,x,r;T,\nu,\gam)$ of \eqref{eq:p-def}.
If the market price of a bond were higher than the indifference price, then the investor would prefer to invest his entire wealth in the money market account.
If the market price of a bond were lower than the indifference price, then the investor would prefer to purchase $\nu$ bonds and invest the rest of his wealth in the money market account.
\\[0.5em]
Frequently, bond prices are described in terms of their \textit{yield} $Y^T = (Y_t^T)_{0 \leq t \leq T}$, where
\begin{align}
Y_t^T	
	&:=	\frac{ -  1 }{T-t} \log P_t^T .
\end{align}
Accordingly, we define the investor's \textit{indifference yield} $y(t,x,r;T,\nu,\gam)$ as follows
\begin{align}
y(t,x,r;T,\nu,\gam)
	&:=	\frac{-1}{T-t} \log p(t,x,r;T,\nu,\gam) . \label{eq:y-def}
\end{align}
If the market yield of a bond were lower than the indifference yield, then the investor would prefer to invest his entire wealth in the money market account.
If the market yield of a bond were higher than the indifference yield, then the investor would prefer purchase $\nu$ bonds an invest the rest of his wealth in the money market account.

\section{Laplace transform of the money market account $X_T$}
\label{sec:laplace}

In order to deduce the indifference price of a bond, we shall need the following proposition.

\begin{proposition}
Let $L$ denote the \textit{Laplace transform} of $X_T$
\begin{align}
L(t,x,r;T,z)
	&:= \Eb_{t,x,r} \ee^{- z X_T} . \label{eq:L-def}
\end{align}
Then the function $L$ satisfies
\begin{align}
L(t,x,r;T,z)
	&=	\frac{1}{2\pi} \int_\Rb \dd \om_r \, z^{\ii \omega } \Gam( - \ii \omega ) \ee^{A(t,\om;T) + r B(t,\om;T) + \ii \om \log x} , & 
\om_i
	> 0 , \label{eq:L}
\end{align}
where $\Gamma$ denotes a Gamma function, $\om = \om_r + \ii \om_i$ and $A$ and $B$ satisfy
\begin{align}
0
	&=	\d_t A + \mu_0 B + \tfrac{1}{2} \sig_0 B^2 , &
A(T,\om;T)
	&=	0 , \label{eq:A-ode} \\
0
	&=	\d_t B + \mu_1 B + \tfrac{1}{2} \sig_1  B^2 + \ii \om , &
B(T,\om;T)
	&=	0 . \label{eq:B-ode}
\end{align}
\end{proposition}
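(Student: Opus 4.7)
The plan is to reduce the Laplace transform $L$ to an exponential-affine conditional expectation of the integrated short rate by means of a Mellin--Barnes representation of $\ee^{-\cdot}$, and then to invoke the standard affine ansatz. First, since $\dd X_t = R_t X_t \, \dd t$, integration yields $X_T = x \ee^{I_T} > 0$ with $I_T := \int_t^T R_s \, \dd s$, so that $X_T$ is a deterministic monotone functional of the integrated short rate, and the task reduces to computing $\Eb_{t,r}[\ee^{-z x \ee^{I_T}}]$.

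Next, I would invoke the Mellin--Barnes representation
\begin{equation}
\ee^{-y} = \frac{1}{2\pi \ii} \int_{c-\ii \infty}^{c + \ii \infty} \Gam(s)\, y^{-s} \, \dd s, \qquad c > 0, \ y > 0,
\end{equation}
applied to $y = z X_T$, and reparametrize the contour by $s = -\ii \om$, so that $\om = \om_r + \ii \om_i$ traces the horizontal line $\om_i = c > 0$ as $\om_r$ ranges over $\Rb$. Expanding $y^{-s} = z^{\ii \om} x^{\ii \om} \ee^{\ii \om I_T}$ produces the factors $z^{\ii \om}$ and $\ee^{\ii \om \log x}$ of the claim, while the Jacobian $\dd s = \ii \, \dd \om_r$ combined with the $1/(2\pi \ii)$ prefactor delivers the $1/(2\pi)$ appearing in \eqref{eq:L}. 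Taking expectations and applying Fubini to interchange $\Eb_{t,r}$ with the $\om_r$-integral then reduces the problem to identifying $\Eb_{t,r}[\ee^{\ii \om I_T}]$.

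For this identification, I would use Feynman--Kac: the function $g(t,r) := \Eb_{t,r}[\ee^{\ii \om I_T}]$ solves
\begin{equation}
\d_t g + \mu(t,r)\, \d_r g + \tfrac{1}{2} \sig^2(t,r)\, \d_r^2 g + \ii \om r\, g = 0, \qquad g(T,r) = 1.
\end{equation}
Substituting the exponential-affine ansatz $g = \exp(A(t,\om;T) + r B(t,\om;T))$, using the affine forms $\mu = \mu_0 + \mu_1 r$ and $\sig^2 = \sig_0 + \sig_1 r$, and then separating constant and $r$-coefficients yields exactly the Riccati system \eqref{eq:A-ode}--\eqref{eq:B-ode} with the stated terminal conditions. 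Assembling with the previous steps produces \eqref{eq:L}.

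The main technical obstacle is justifying the Fubini exchange along the contour $\om_i = c > 0$: the Gamma factor decays like $\ee^{-\pi |\om_r|/2}$ as $|\om_r| \to \infty$, which is strong decay, but one must ensure that $|\ee^{A+rB}|$ does not grow faster than this rate in $\om_r$. This should be tractable directly from the Riccati equation for $B$, which in standard ATS sub-classes (including the Vasicek model used later) grows no faster than polynomially in $|\om_r|$ on compact time intervals, simultaneously guaranteeing global existence of the complex-valued Riccati solution on $[t,T]$ along this horizontal line.
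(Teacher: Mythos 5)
Your proof is correct and reaches the stated formula, but by a route that differs from the paper's in a worthwhile way. The paper works entirely at the PDE level: it writes the Kolmogorov backward equation for $L$ in $(x,r)$, changes variables to $q=\log x$, applies a generalized Fourier transform in $q$ (so that the terminal condition $\exp(-z\ee^q)$ transforms into $z^{\ii\om}\Gam(-\ii\om)$, and the transport term $r\d_q$ becomes the potential $\ii\om r$), solves the transformed PDE with the exponential-affine ansatz, and inverts. You instead exploit the pathwise solution $X_T=x\,\ee^{I_T}$ with $I_T=\int_t^T R_s\,\dd s$, apply the Cahen--Mellin (Mellin--Barnes) representation of $\ee^{-y}$ directly to the random variable $y=zX_T$, and reduce to the classical affine transform formula for $\Eb_{t,r}[\ee^{\ii\om I_T}]$ via Feynman--Kac. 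These are really the same computation in two dresses --- the paper's Fourier transform of $\exp(-z\ee^q)$ in $q$ \emph{is} the Mellin transform identity $\int_\Rb \exp(-z\ee^q)\ee^{-\ii\om q}\,\dd q = z^{\ii\om}\Gam(-\ii\om)$ that underlies your contour integral, and the paper's transformed PDE is exactly your Feynman--Kac equation for $g(t,r)=\Eb_{t,r}[\ee^{\ii\om I_T}]$ multiplied by the constant $z^{\ii\om}\Gam(-\ii\om)$ --- but your version is more probabilistic and makes the role of the integrated short rate explicit, which connects the result directly to the standard affine transform literature. A further point in your favor: you flag the Fubini interchange and the decay of the integrand along the contour $\om_i=c>0$, which the paper does not address at all; your remark is accurate (in the Vasicek case the real part of $A$ even contributes Gaussian decay in $\om_r$, since the coefficient of $\om^2$ there is nonpositive for $T\geq t$), though for general ATS models with $\sig_1\neq 0$ one would still need to verify nonexplosion of the complex Riccati solution $B$ on $[t,T]$, as you note.
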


\begin{proof}
The function $L$ satisfies the following partial differential equation (PDE)
\begin{align}
(\d_t + \Lc) L
	&= 0 , &
L(T,x,r;T,z)
	&=	\ee^{- z x} ,
\end{align}
where $\Lc$ is the generator of $(X,R)$ and is given explicitly by
\begin{align}
\Lc
	&=	r x \d_x + \mu(t,r) \d_r + \tfrac{1}{2} \sig^2(t,r) \d_r^2 .
\end{align}
Consider the following change of variables
\begin{align}
L(t,x,r;T,z)
	&=	M(t,q(x),r;T,z) , &
q(x)
	&=	\log x . \label{eq:var-change}
\end{align}
It is easy to show that the function $M$ satisfies
\begin{align}
(\d_t + \Mc ) M
	&=	0 , &
M(T,q,r;T,z)
	&=	\exp( - z \ee^q ) ,
\end{align}
where the operator $\Mc$ is given by
\begin{align}
\Mc
	&=	 r \d_q + \mu(t,r) \d_r + \tfrac{1}{2} \sig^2(t,r) \d_r^2 .
\end{align}
Now, let $\Mh$ denote the generalized Fourier transform of $M$ with respect to the $q$ variable
\begin{align}
\Mh(t,\om,r;T,z)
	&:= \Fv[ M(t,\cdot,r;T,z) ](\om)
	=		\int_\Rb \dd q \,  M(t,q,r;T,z) \ee^{- \ii \om q} . 
\end{align}
Then, for any $\om = \om_r + \ii \om_i$ such that $\om_i > 0$, the function $\Mh$ satisfies
\begin{align}
(\d_t + \Mch)\Mh
	&=	0 , &
\Mh(T,\om,r;T,z)
	&= z^{\ii \omega } \Gam( - \ii \omega ) , \label{eq:M-hat-pde}
\end{align}
where the operator $\Mch$ is given by
\begin{align}
\Mch
	&=	\ii \om r + \mu(t,r) \d_r + \tfrac{1}{2} \sig^2(t,r) \d_r^2 .
\end{align}
Now, consider the following Ansatz
\begin{align}
\Mh(t,\om,r;T,z)
	&=	z^{\ii \omega } \Gam( - \ii \omega ) \ee^{A(t,\om;T) + r B(t,\om;T)} , &
A(T,\om;T)
	&=	0 , &
B(T,\om;T)
	&=	 0 . \label{eq:ansatz}
\end{align}
Observe that the Ansatz $\Mh$ in \eqref{eq:ansatz} satisfies the terminal condition in \eqref{eq:M-hat-pde}.  Inserting the Ansatz \eqref{eq:ansatz} into the PDE \eqref{eq:M-hat-pde}, recalling that $\mu$ are $\sig$ are of the form \eqref{eq:affine} and collecting terms of like order in $r$ we find that $A$ and $B$ satisfy the system of coupled ordinary differential equations (ODEs) given by \eqref{eq:A-ode} and \eqref{eq:B-ode}.
Assuming $A$ and $B$ can be computed either analytically or numerically, then the function $M$ is obtained by taking the inverse Fourier transform of $\Mh$.  We have
\begin{align}
M(t,q,r;T,z)
	&=	\Fv^{-1}[ \Mh(t,\cdot,r;T,z) ](q)
	=		\frac{1}{2\pi} \int_\Rb \dd \om_r \, \Mh(t,\om,r;T,z) \ee^{ \ii \om q} \\
	&=	\frac{1}{2\pi} \int_\Rb \dd \om_r \, z^{\ii \omega } \Gam( - \ii \omega ) \ee^{A(t,\om;T) + r B(t,\om;T) + \ii \om q} .
\end{align}
Finally, undoing the variable change \eqref{eq:var-change}, we find that the Laplace transform of $X_T$ is given by \eqref{eq:L}.
\end{proof}

\section{Bond indifference prices and indifference yields}
\label{sec:indifference}

Having obtained the Laplace transform of the money market account $X_T$, we are now in a position to show that the indifference price of a bond is the root of an integral expression.  We will consider two cases (i) the investor has a utility function of exponential form, and (ii) the investor has a utility function in the power form.

\begin{theorem}[Exponential utility]
\label{thm:exp}
Suppose the investor's utility function $U$ is of the exponential form
\begin{align}
U(x;\gam)
	&=	\frac{-1}{\gam} \ee^{- \gam x} , &
\gam
	&> 0 , \label{eq:U-exp}
\end{align}
Then the indifference price $p \equiv p(t,x,r;T,\nu, \gam)$ satisfies
\begin{align}
0
	&=	 \int_\Rb \dd \om_r  \, \gam^{\ii \omega } \Gam( - \ii \omega ) \ee^{A(t,\om;T) + r B(t,\om;T)} 
				\Big( \ee^{\ii \om \log x} - \ee^{\ii \om \log (x - \nu p) - \gam \nu} \Big) , &
\om_i
	&>	0 , \label{eq:p-exp}
\end{align}
and the indifference yield $y \equiv y(t,x,r;T,\nu, \gam)$ satisfies
\begin{align}
0
	&=	 \int_\Rb \dd \om_r  \, \gam^{\ii \omega } \Gam( - \ii \omega ) \ee^{A(t,\om;T) + r B(t,\om;T)} 
				\Big( \ee^{\ii \om \log x} - \ee^{\ii \om \log (x - \nu \ee^{- (T-t)y}) - \gam \nu} \Big) , &
\om_i
	&>	0 , \label{eq:y-exp}
\end{align}
where $A$ and $B$ are the solutions of \eqref{eq:A-ode} and \eqref{eq:B-ode}, respectively
\end{theorem}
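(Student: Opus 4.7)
The plan is to reduce the indifference equation \eqref{eq:p-def} to an expression involving only the Laplace transform $L$ from \eqref{eq:L-def}, and then substitute the formula \eqref{eq:L} already established in the previous proposition.

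First, I would compute the value function under exponential utility. Since $U(x;\gam) = -\tfrac{1}{\gam} e^{-\gam x}$, the multiplicative structure of the exponential gives
\begin{align}
V(t,x,r;T,\nu,\gam)
    &= \Eb_{t,x,r}\Bigl[-\tfrac{1}{\gam} e^{-\gam(X_T + \nu)}\Bigr]
    = -\tfrac{1}{\gam} e^{-\gam \nu}\, L(t,x,r;T,\gam),
\end{align}
so in particular $V(t,x,r;T,0,\gam) = -\tfrac{1}{\gam} L(t,x,r;T,\gam)$. Substituting these two expressions into the defining equation \eqref{eq:p-def} and canceling the common factor $-1/\gam$, I obtain the clean identity
\begin{align}
L(t,x,r;T,\gam) \;=\; e^{-\gam \nu}\, L(t,x - \nu p,r;T,\gam),
\end{align}
which is the heart of the argument.

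Next, I would invoke the integral representation \eqref{eq:L} for $L$ with the choice $z = \gam$, noting that $\gam > 0$ makes the factor $\gam^{\ii \om}$ well defined along the contour $\om_i > 0$. Rearranging the identity above to put both terms on one side and pulling the common $\om$-dependent factors $\gam^{\ii \om}\Gam(-\ii \om) e^{A(t,\om;T)+rB(t,\om;T)}$ outside the parenthesis, the only pieces that differ between the two terms are the $e^{\ii \om \log x}$ factor (from $L(t,x,r;T,\gam)$) and the $e^{\ii \om \log(x-\nu p)}$ factor multiplied by $e^{-\gam \nu}$ (from $e^{-\gam \nu} L(t,x-\nu p,r;T,\gam)$). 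This yields exactly \eqref{eq:p-exp}. The formula \eqref{eq:y-exp} for the indifference yield then follows immediately by substituting $p = e^{-(T-t)y}$ into \eqref{eq:p-exp}, per the definition \eqref{eq:y-def}.

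No step here is really an obstacle; the main subtlety is bookkeeping on the integration contour (the requirement $\om_i > 0$ comes directly from the prior proposition, and is preserved since we do not change the variable of integration) and making sure the factor $-1/\gam$ cancels cleanly on both sides so that the resulting equation is homogeneous in $L$ and hence can be written as a single integral set equal to zero. Once this is observed, the proof is essentially a substitution.
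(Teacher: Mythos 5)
Your proposal is correct and follows essentially the same route as the paper: express $V$ under exponential utility as $-\tfrac{1}{\gam}e^{-\gam\nu}L(t,x,r;T,\gam)$, substitute the integral representation of $L$ with $z=\gam$ into the indifference equation \eqref{eq:p-def}, cancel the common prefactor, and obtain \eqref{eq:y-exp} from \eqref{eq:y-def}. No gaps.
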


\begin{proof}
Using \eqref{eq:L-def}, \eqref{eq:L} and \eqref{eq:U-exp} we have
\begin{align}
V(t,x,r;T,\nu, \gam)
	&=	\frac{-1}{\gam} \Eb_{t,x,r} \ee^{- \gam (X_T + \nu, \gam)}
	=		\frac{-\ee^{-\gam \nu}}{\gam} L(t,x,r;T,\gam) \\
	&=	\frac{-\ee^{-\gam \nu}}{2 \pi \gam} \int_\Rb \dd \om_r \, \gam^{\ii \omega } \Gam( - \ii \omega ) \ee^{A(t,\om;T) + r B(t,\om;T) + \ii \om \log x} . \label{eq:V-exp}
\end{align}
From \eqref{eq:p-def} 
 we find that the indifference price $p \equiv p(t,x,r;T,\nu, \gam)$ satisfies
\begin{align}
0
	&=	- 2 \pi \gam \Big( V(t,x,r;T,0,\gam) - V(t,x-\nu p,r;T,\nu, \gam) \Big)  . \label{eq:no-name-exp}
\end{align}
Inserting \eqref{eq:V-exp} into \eqref{eq:no-name-exp} and simplifying yields \eqref{eq:p-exp}.
The expression \eqref{eq:y-exp} for the indifference yield  $y \equiv y(t,x,r;T,\nu, \gam)$ follows from \eqref{eq:y-def} and \eqref{eq:p-exp}.
\end{proof}

\begin{theorem}[Power utility]
\label{thm:pow}
Suppose the investor's utility function $U$ is of the power form
\begin{align}
U(x;\gam)
	&=	\frac{x^{1-\gam}}{1-\gam} , &
\gam
	&\in (0,1) \cup (1,\infty) . \label{eq:U-pow} 
\end{align}
Then the indifference price $p$ satisfies
\begin{align}
0
	&=	\int_0^\infty \frac{\dd z}{z^{2-\gam}} \int_\Rb \dd \om_r \, 
			z^{\ii \omega } \Gam( - \ii \omega ) \ee^{A(t,\om;T) + r B(t,\om;T)} \Big( \ee^{\ii \om \log x} - \ee^{\ii \om \log (x - \nu p) - z \nu} \Big) , &
\om_i
	&>	0, \label{eq:p-pow}
\end{align}
and the indifference yield $y \equiv y(t,x,r;T,\nu, \gam)$ satisfies
\begin{align}
0
	&=	\int_0^\infty \frac{\dd z}{z^{2-\gam}} \int_\Rb \dd \om_r \, 
			z^{\ii \omega } \Gam( - \ii \omega ) \ee^{A(t,\om;T) + r B(t,\om;T)} \Big( \ee^{\ii \om \log x} - \ee^{\ii \om \log (x - \nu \ee^{-(T-t)y}) - z \nu} \Big) , &
\om_i
	&>	0 , 			\label{eq:y-pow}
\end{align}
where $A$ and $B$ are the solutions of \eqref{eq:A-ode} and \eqref{eq:B-ode}, respectively.
\end{theorem}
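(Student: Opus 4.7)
The plan is to mimic the structure of the proof of Theorem \ref{thm:exp}: represent the expected utility $V$ as an integral involving the Laplace transform $L$ of $X_T$, invoke the formula for $L$ derived in the previous proposition, impose the indifference condition \eqref{eq:p-def}, and read off the result. The key new ingredient, relative to the exponential case, is that the power utility must first be written as an integral of exponentials $\ee^{-z x}$ so that $\Eb_{t,x,r}[U(X_T + \nu;\gam)]$ can be expressed as an integral of Laplace transforms.

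Concretely, I would use (or verify by Fubini/integration by parts) the representations
\begin{align}
\gam > 1: \quad U(x;\gam) &= \frac{-1}{\Gam(\gam)} \int_0^\infty \frac{\ee^{-z x}}{z^{2-\gam}} \dd z, \\
\gam \in (0,1): \quad U(x;\gam) &= \frac{1}{\Gam(\gam)} \int_0^\infty \frac{1 - \ee^{-z x}}{z^{2-\gam}} \dd z,
\end{align}
which both follow from the identity $\int_0^\infty z^{\gam-2}(c - \ee^{-z x})\dd z = c'\cdot x^{1-\gam}$ (with suitable constants depending on $\gam$) after a rescaling $u = z x$. Substituting into $V(t,x,r;T,\nu,\gam) = \Eb_{t,x,r} U(X_T + \nu;\gam)$, applying Fubini to exchange the $z$-integral with the expectation, and recognizing $\Eb_{t,x,r}\ee^{-z(X_T + \nu)} = \ee^{-z\nu} L(t,x,r;T,z)$, I obtain
\begin{align}
V(t,x,r;T,\nu,\gam) = \frac{c_\gam}{\Gam(\gam)} \int_0^\infty \frac{\dd z}{z^{2-\gam}} \Big( c_\gam' - \ee^{-z\nu} L(t,x,r;T,z) \Big),
\end{align}
where $(c_\gam,c_\gam') = (-1,0)$ for $\gam > 1$ and $(1,1)$ for $\gam \in (0,1)$.

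Now I would form the difference $V(t,x,r;T,0,\gam) - V(t,x-\nu p,r;T,\nu,\gam)$ required by the indifference condition \eqref{eq:p-def}. In the case $\gam \in (0,1)$, the constant contribution $c_\gam' \int_0^\infty z^{\gam-2}\dd z$ is formally divergent but identical in both terms, so it cancels out of the difference, leaving
\begin{align}
0 = \int_0^\infty \frac{\dd z}{z^{2-\gam}} \Big( L(t,x,r;T,z) - \ee^{-z\nu} L(t,x-\nu p,r;T,z) \Big).
\end{align}
Plugging in formula \eqref{eq:L} for $L$ and exchanging the order of the $z$ and $\om_r$ integrals produces \eqref{eq:p-pow} after trivial algebraic simplification (the overall nonzero prefactor $1/(2\pi)$ and the signs can be absorbed since the expression is set to zero). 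The expression \eqref{eq:y-pow} for the indifference yield then follows immediately by substituting $p = \ee^{-(T-t)y}$, as in the exponential case.

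The main technical obstacle is justifying the interchange of integrals: Fubini requires verifying absolute integrability of $z^{\gam-2}\ee^{-z X_T}$ jointly in $z$ and $\om$. This is delicate because, in the $\gam \in (0,1)$ case, the naive integrand is not absolutely integrable at $z = 0$, and one must instead work with the regularized integrand $z^{\gam-2}(1 - \ee^{-zX_T})$, for which the integral converges at $z = 0$ since $1 - \ee^{-zX_T} = O(z)$ there. A secondary (minor) issue is verifying the growth conditions on $A$ and $B$ in $\om_r$ that permit passing the $z$-integral inside the $\om_r$-contour; this is inherited from the same assumptions needed for the validity of \eqref{eq:L} and so no new work is required.
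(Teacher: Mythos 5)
Your proposal is correct and follows essentially the same route as the paper: the paper likewise invokes the two integral representations of $x^{1-\gam}$ (citing Sch\"urger) for $\gam\in(0,1)$ and $\gam>1$ separately, writes $V$ as a $z$-integral of $\ee^{-z\nu}L(t,x,r;T,z)$, and reads off \eqref{eq:p-pow} from the indifference condition \eqref{eq:p-def}. Your added remarks on the cancellation of the divergent constant and the Fubini interchange are more careful than the paper's own (largely formal) treatment, but they do not change the argument.
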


\begin{proof}
In the case that $\gam \in (0,1)$, using the following identity from from \cite{schurger2002laplace}
\begin{align}
x^\alpha
	&=	\frac{\alpha}{\Gam(1-\alpha)} \int_0^\infty \frac{\dd z }{z^{1+\alpha}} \Big( 1 - \ee^{- z x} \Big)  , &
\alpha
	&\in (0,1) ,
\end{align}
we obtain from \eqref{eq:L-def}, \eqref{eq:L} and \eqref{eq:U-pow} that
\begin{align}
V(t,x,r;T,\nu, \gam)
	&=	\frac{1}{\Gam(\gam)} \int_0^\infty \frac{\dd z}{z^{2-\gam}} \Big( 1 - \Eb_{t,x,r}\ee^{- z (X_T + \nu, \gam)} \Big)  \\
	&=	\frac{1}{\Gam(\gam)} \int_0^\infty \frac{\dd z}{z^{2-\gam}} \Big( 1 - \ee^{- z \nu} L(t,x,r;T,z) \Big) \\
	&=	\frac{1}{\Gam(\gam)} \int_0^\infty \frac{\dd z}{z^{2-\gam}} 
			\Big( 1 - \frac{\ee^{- z \nu}}{2 \pi} \int_\Rb \dd \om_r \, z^{\ii \omega } \Gam( - \ii \omega ) \ee^{A(t,\om;T) + r B(t,\om;T) + \ii \om \log x} \Big) . \label{eq:V-frac-pow}
\end{align}
And, in the case that $\gam \in (1,\infty)$, using the following identity from from \cite{schurger2002laplace}
\begin{align}
\frac{1}{x^\alpha}
	&=	\frac{1}{\Gam(\alpha)} \int_0^\infty \dd z \, z^{\alpha - 1} \ee^{- z x} , &
\alpha
	&> 0 ,
\end{align}
we obtain from \eqref{eq:L-def}, \eqref{eq:L} and \eqref{eq:U-pow} that
\begin{align}
V(t,x,r;T,\nu, \gam)
	&=	\frac{1}{(1-\gam)\Gam(\gam-1)} \int_0^\infty \frac{\dd z}{z^{2-\gam}} \Eb_{t,x,r} \ee^{- z (X_T + \nu, \gam)} \\
	&=	\frac{-1}{\Gam(\gam)} \int_0^\infty \frac{\dd z}{z^{2-\gam}} \ee^{- \nu z} L(t,x,r;T,z) \\
	&=	\frac{-1}{\Gam(\gam)} \int_0^\infty \frac{\dd z}{z^{2-\gam}}
			\frac{\ee^{- z \nu}}{2 \pi} \int_\Rb \dd \om_r \, z^{\ii \omega } \Gam( - \ii \omega ) \ee^{A(t,\om;T) + r B(t,\om;T) + \ii \om \log x} . \label{eq:V-neg-pow}
\end{align}
From \eqref{eq:p-def} we have
\begin{align}
0
	&=	- 2 \pi \Gam(\gam) \Big( V(t,x,r;T,0,\gam) - V(t,x-\nu p,r;T,\nu, \gam) \Big) . \label{eq:no-name-pow}
\end{align}
Inserting either \eqref{eq:V-frac-pow} or \eqref{eq:V-neg-pow} into \eqref{eq:no-name-pow} we find that, for both $\gam \in (0,1)$ and $\gam \in (1,\infty)$, the indifference price $p$ satisfies \eqref{eq:p-pow}.
The expression \eqref{eq:y-pow} for the indifference yield $y \equiv y(t,x,r;T,\nu, \gam)$ follows from \eqref{eq:y-def} and \eqref{eq:p-pow}.
\end{proof}


\noindent
Indifference bond prices and yields can be easily found using a numerical root-finding algorithm such as \texttt{FindRoot[]} in Wolfram's Mathematica.

\section{Example: Vasicek model}
\label{sec:example}
In this section, we assume that the short-rate $R$ has Vasicek dynamics
\cite{VASICEK1977177}
\begin{align}
\dd R_t
	&=	\kappa ( \theta - R_t) \dd t + \del \dd W_t . \label{eq:dR-vasicek}
\end{align}
Note that the Vasicek model is an ATS model with
\begin{align}
\mu_0(t)
	&=	\kappa \theta , &
\mu_1(t)
	&=	- \kappa , &
\sig_0(t)
	&=	\del^2 , &
\sig_1(t)
	&=	0 . \label{eq:mu-sig-vasicek}
\end{align}
Inserting \eqref{eq:mu-sig-vasicek} into \eqref{eq:A-ode} and \eqref{eq:B-ode} and solving the ODEs we obtain
\begin{align}
A(t,\om;T)
	&= \frac{\ii \om \theta}{\kappa} \Big( \ee^{-\kappa (T-t)} - 1 + \kappa ( T - t) \Big)
			+ \frac{\om^2 \del^2}{4 \kappa^3} \Big( \ee^{-2\kappa (T-t)} - 4 \ee^{-\kappa (T-t)} + \big( 3 - 2 \kappa ( T-t) \big) \Big), \\
B(t,\om,T)
	&=	\frac{\ii \om}{\kappa} \Big( 1 - \ee^{-\kappa (T-t)} \Big) .
\end{align}
Now, suppose that, with the money market account as num\'eraire, the market were pricing bonds under a \textit{risk-neutral (i.e., martingale) measure} $\Pbt^{(\lam)}$, which is defined by a constant \textit{market price of interest rate risk} $\lam$
\begin{align}
\frac{\dd \Pbt^{(\lam)}}{\dd \Pb}
	&=	\exp \Big( - \frac{1}{2} \lam^2 T - \lam W_T \Big) .
\end{align}
Then the dynamics of $R$ under $\Pbt^{(\lam)}$ would be
\begin{align}
\dd R_t
	&=	\kappa ( \thetat(\lam) - R_t) \dd t + \del \dd \Wt_t^{(\lam)} , &
\thetat(\lam)
	&:=	\theta - \del \lam / \kappa ,
\end{align}
where the process $\Wt^{(\lam)} = (\Wt_t^{(\lam)})_{0 \leq t \leq T}$, defined by $\Wt_t^{(\lam)} = W_t + \lam t$, is a $(\Pbt^{(\lam)},\Fb)$-Brownian motion.
\textit{market bond prices} $\Pt^{T,(\lam)} = (\Pt_t^{T,(\lam)})_{0 \leq t \leq T}$ and the corresponding \textit{market yields} $\Yt^{T,(\lam)} = (\Yt_t^{T,(\lam)})_{0 \leq t \leq T}$ would be given by
\begin{align}
\Pt_t^{T,(\lam)}
	&=	\Ebt^{(\lam)} \Big( \ee^{- \int_t^T R_s \dd s} \Big| \Fc_t \Big)
	=: \pt(t,R_t;T,\lam) , \\
\Yt_t^{T,(\lam)}
	&=	\frac{- \log \Pt_t^{T,(\lam)} }{T-t}
	=		\frac{- \log \pt(t,R_t;T,\lam) }{T-t}
	=:	\yt(t,R_t;T,\lam) .
\end{align}
The function $\pt$ satisfies the following PDE
\begin{align}
(\d_t - r + \Act^{(\lam)}) \pt 
	&=	0 , &
\pt(T,r;T,\lam)
	&=	1 ,
\end{align}
where the operator $\Act^{(\lam)}$ is the generator of $R$ under $\Pbt^{(\lam)}$ and is given explicitly by
\begin{align}
\Act^{(\lam)}
	&=	\kappa (\thetat(\lam) - r) \d_r + \tfrac{1}{2}\del^2 \d_r^2 .
\end{align}
As $R$ has ATS dynamics under $\Pbt^{(\lam)}$, one can easily verify that $\pt$ is given by
\begin{align}
\pt(t,r;T,\lam)
	&=	\ee^{\At(t;T,\lam) + r \Bt(t;T,\lam)} , \label{eq:pt} 
\end{align}
where the functions $\At$ and $\Bt$ are defined as follows
\begin{align}
\At(t;T,\lam)
	&:=	A(t,\ii;T), &
\Bt(t;T,\lam)
	&:=	B(t,\ii;T) , &
\theta
	&\to \thetat(\lam) . \label{eq:thetat}
\end{align}
Now, let us consider an investor with an exponential utility function, as described in Theorem \ref{thm:exp}.  Fix the following parameters
\begin{align}
\kappa
	&=	0.05 , &
\theta
	&=	0.03 , &
\del
	&=	\sqrt{2 \kappa^2 \theta} , & 
t
	&=	0, &
x
	&=	5, &
r
	&=	0.01 . \label{eq:parameters}
\end{align}
On the left side of Figure \ref{fig:gam} we plot the indifference yield curve $y(t,x,r;T,\nu, \gam)$ (i.e., the solution of \eqref{eq:y-exp}) as a function of $T$ with the number of bonds purchased fixed at $\nu = 1$ and with the risk-aversion parameter varying from $\gam = 0.1$ (\blu{blue}) to $\gam = 0.2$ (\magenta{magenta}).  For comparison, we also plot the market yield curve $\yt(t,x,r;T,\lam)$ (dashed, black) assuming the market prices of risk is zero $\lam = 0$.  Observe that as the risk-aversion parameter $\gam$ increases the corresponding indifference yield curves decrease.  This is consistent with the intuition that, the more risk-averse an investor is, the more willing he would be to accept a low but known rate of return of a bond rather than a potentially higher but uncertain rate of return of a money market account.
\\[0.5em]
On the right side of Figure \ref{fig:gam} we plot the indifference yield curve $y(t,x,r;T,\nu, \gam)$ as a function of $T$ with the risk-aversion parameter $\gam = 0.15$ fixed and with the number of bonds purchased varying from $\nu = -4.5$ (\blu{blue}) to $\nu = 4.5$ (\magenta{magenta}).  Note that a negative $\nu$ indicates selling bonds.  For comparison, we also plot the market yield curve $\yt(t,x,r;T,\lam)$ (dashed, black) assuming the market prices of risk is zero $\lam = 0$.  Observe that as the number of bonds purchased increases the corresponding indifference yield curves also increase.  This is consistent with the intuition that a risk-averse seller of a bond will ask a higher price than a risk-averse buyer of a bond is willing to bid (thereby resulting in a bid-ask spread).
\\[0.5em]
Using \eqref{eq:pt}, a direct computation shows that the dynamics of the market bond price $\Pt^{T,(\lam)}$ under the physical probability measure $\Pb$ satisfies
\begin{align}
\dd \Big( \frac{\Pt_t^{T,(\lam)} }{X_t} \Big)
	&=	Q_t^{T,(\lam)} \Big( \frac{\Pt_t^{T,(\lam)} }{X_t} \Big) \dd t + S_t^{T,(\lam)} \Big( \frac{\Pt_t^{T,(\lam)} }{X_t} \Big) \dd W_t ,
\end{align}
where the drift $Q^{T,(\lam)}$ and volatility $S^{T,(\lam)}$ are given by
\begin{align}
Q_t^{T,(\lam)}
	&= \frac{1}{\pt(t,R_T;T,\lam)} \Big( \d_t - R_t + \kappa( \theta - R_t) \d_r + \tfrac{1}{2} \del^2 \d_r^2 \Big) \pt(t,R_T;T,\lam) , \label{eq:Q} \\
S_t^{T,(\lam)}
	&=	\frac{1}{\pt(t,R_T;T,\lam)}\del \d_r \pt(t,R_T;T,\lam) . \label{eq:S}
\end{align}
One can show from \eqref{eq:pt} that
\begin{align}
\frac{Q_t^{T,(\lam)}}{ S_t^{T,(\lam)}}
	&=	\lam ,
\end{align}
which allows us to interpret the market price of risk $\lam$ as the return $Q^{T,(\lam)}$ that the market pays an investor in order to bear the risk of holding a bond with volatility $S^{T,(\lam)}$.  Using the above, we can (and we do) define the \textit{investor's indifference price of interest rate risk} as the solution $\lam \equiv \lam(t,x,r;T,\nu, \gam)$ of
\begin{align}
p(t,x,r;T,\nu, \gam)
	&=	\pt(t,r;T,\lam) .
\end{align}
On the left side of Figure \ref{fig:lambda}, we plot the market yield $\yt(t,r;T,\lam)$ as a function of $T$ with the market price of risk $\lam$ varying from $\lam = -0.05$ (\blu{blue}) to $\lam = 0.05$ (\magenta{magenta}).  For reference, we also plot $\yt(t,r;T,0)$ (dashed black).  In the figure, we see that, as the market price of risk $\lam$ increases, the corresponding yield curve decreases.  This is to be expected, as the long-run mean $\thetat(\lam)$ of $R$ under $\Pbt^{(\lam)}$ is a decreasing function of $\lam$.  A lower long-run mean $\thetat(\lam)$ results in a higher bond price and therefore a lower yield.
\\[0.5em]
On the right-hand side of Figure \ref{fig:lambda} we plot the investor's indifference price of interest rate risk $\lam(t,x,r;T,\nu,\gam)$ as a function of $T$ with the investor's risk aversion varying from $\gam = 0.1$ (\blu{blue}) to $\gam = 0.2$ (\magenta{magenta}).  To understand the plot, consider the following: 
the investor described in Section \ref{sec:model} has chosen the bond as num\'eraire.  For this investor, it is the money market account, rather than the bond, that is the risky asset.  A routine application of It\^o's Lemma shows that $X/P^{T,(\lam)}$ satisfies
\begin{align}
\dd \Big( \frac{X_t}{P_t^{T,(\lam)}} \Big)
	&=	\Big( (S_t^{T,(\lam)})^2 - Q_t^{T,(\lam)} \Big) \Big( \frac{X_t}{P_t^{T,(\lam)}} \Big) \dd t + S_t^{T,(\lam)} \Big( \frac{X_t}{P_t^{T,(\lam)}} \Big) ( - \dd W_t  ).
\end{align}
One can show from \eqref{eq:pt}, \eqref{eq:Q} and \eqref{eq:S} that
\begin{align}
\frac{(S_t^{T,(\lam)})^2 - Q_t^{T,(\lam)}}{ S_t^{T,(\lam)}}
	&=	- \lam - \frac{\del}{\kappa} \Big( 1 - \ee^{-\kappa(T-t)} \Big) .
\end{align}
The higher investor's indifference price of interest rate risk $\lam$, the lower the reward the investor perceives for bearing the risk of holding the money market account.  Thus, as the investor's risk-aversion $\gam$ goes up, we expect the investor's implied price of interest rate risk to go up as well, and this is precisely what we see in Figure \ref{fig:lambda}.

\section{On the investor's choice of num\'eraire}
\label{sec:final-remarks}
Although we have not focused on it, one could have repeated the analysis of this manuscript assuming the investor had used a money market account $M = (M_t)_{0 \leq t \leq T}$, rather than the bond, as his num\'eraire.  In this case, if the investor at $x$ units of currency invested in the moneny market account at time $t$ and additionally owned $\nu$ bonds, his expected utility would be
\begin{align}
V(t,x,m,r;T,\nu,\gam)
	&:=	\Eb_{t,x,m,r} U \Big( \frac{X_T + \nu P_T^T}{M_T}; \gam \Big) 
	=		\Eb_{t,x,m,r} U \Big( \frac{X_T + \nu }{M_T}; \gam \Big) ,
\end{align}
where $\dd M_t = R_t M_t \dd t$.  In this scenario, the bond $P^T$ would be considered the risky asset and the money market account $M$ would be considered the safe asset.  Rather than repeat the entire analysis here, we simply note that, if the investor has exponential utility, the indifference prices of a bond can be computed explicitly as a numerical integral
\begin{align}
p(t,m,r;T,\nu,\gam)
	&=	- \frac{m}{\gam \nu } \log \frac{1}{2\pi}\int_\Rb \dd \om_r \, \Gamma ( \ii \omega ) \ee^{A(t,\om;T) + r B(t,\om;T) + \ii \om \log \frac{m}{\gam \nu}}  , &
\om_i
	&< 0 ,
\end{align}
and if the investor has power utility, the indifference price  of a bond $p \equiv p(t,x,m,r;T,\nu,\gam)$ satisfies
\begin{align}
0
	&=	\int_0^\infty \frac{\dd z}{z^{2-\gam}} \ee^{- z x/ m} \Big(
			1 - \frac{\ee^{ z \nu p / m}}{2 \pi} \int_\Rb \dd \om_r \, \Gam( \ii \omega ) \ee^{A(t,\om;T) + r B(t,\om;T) + \ii \om \log \frac{m}{z \nu}}
			\Big)  , &
\om_i
	&< 0 ,
\end{align}
where $A$ and $B$ are the solutions to \eqref{eq:A-ode} and \eqref{eq:B-ode}, respectively.

%
%

\clearpage

\bibliographystyle{chicago}
\bibliography{references}	

\begin{thebibliography}{}

\bibitem[\protect\citeauthoryear{Bielecki and Jeanblanc}{Bielecki and
  Jeanblanc}{2006}]{bielecki2006indifference}
Bielecki, T. and M.~Jeanblanc (2006).
\newblock Indifference pricing of defaultable claims.
\newblock {\em Indifference Pricing, Theory and Applications\/}, 211--240.

\bibitem[\protect\citeauthoryear{Carmona}{Carmona}{2009}]{carmona2009indifference}
Carmona, R. (2009).
\newblock {\em Indifference Pricing: Theory and Applications}.
\newblock Princeton Series in Financial Engineering. Princeton University
  Press.

\bibitem[\protect\citeauthoryear{Filipovic}{Filipovic}{2009}]{filipovic2009term}
Filipovic, D. (2009).
\newblock {\em Term-Structure Models: A Graduate Course}.
\newblock Springer Finance. Springer Berlin Heidelberg.

\bibitem[\protect\citeauthoryear{Houssou and Besson}{Houssou and
  Besson}{2010}]{houssou2010indifference}
Houssou, R. and O.~Besson (2010).
\newblock Indifference of defaultable bonds with stochastic intensity models.
\newblock {\em arXiv preprint arXiv:1003.4118\/}.

\bibitem[\protect\citeauthoryear{Jaimungal and Sigloch}{Jaimungal and
  Sigloch}{2012}]{jaimungal2012incorporating}
Jaimungal, S. and G.~Sigloch (2012).
\newblock Incorporating risk and ambiguity aversion into a hybrid model of
  default.
\newblock {\em Mathematical Finance\/}~{\em 22\/}(1), 57--81.

\bibitem[\protect\citeauthoryear{Sch{\"u}rger}{Sch{\"u}rger}{2002}]{schurger2002laplace}
Sch{\"u}rger, K. (2002).
\newblock Laplace transforms and suprema of stochastic processes.
\newblock In {\em Advances in Finance and Stochastics}, pp.\  285--294.
  Springer.

\bibitem[\protect\citeauthoryear{Sigloch}{Sigloch}{2009}]{sigloch2009utility}
Sigloch, G. (2009).
\newblock {\em Utility Indifference Pricing of Credit Instruments}.
\newblock University of Toronto.

\bibitem[\protect\citeauthoryear{Vasicek}{Vasicek}{1977}]{VASICEK1977177}
Vasicek, O. (1977).
\newblock An equilibrium characterization of the term structure.
\newblock {\em Journal of Financial Economics\/}~{\em 5\/}(2), 177 -- 188.

\end{thebibliography}

%
%

\clearpage

\begin{figure}
\begin{tabular}{cc}
\includegraphics[width=0.48\textwidth]{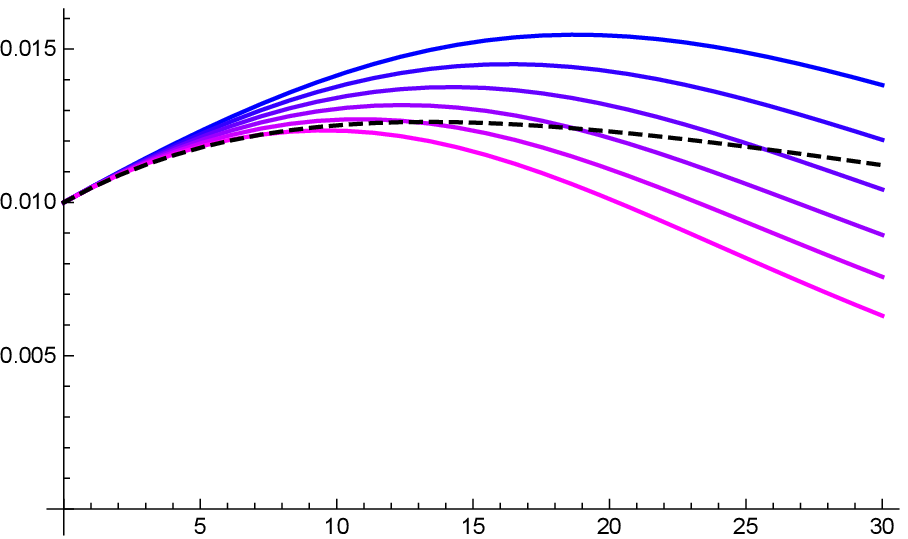}&
\includegraphics[width=0.48\textwidth]{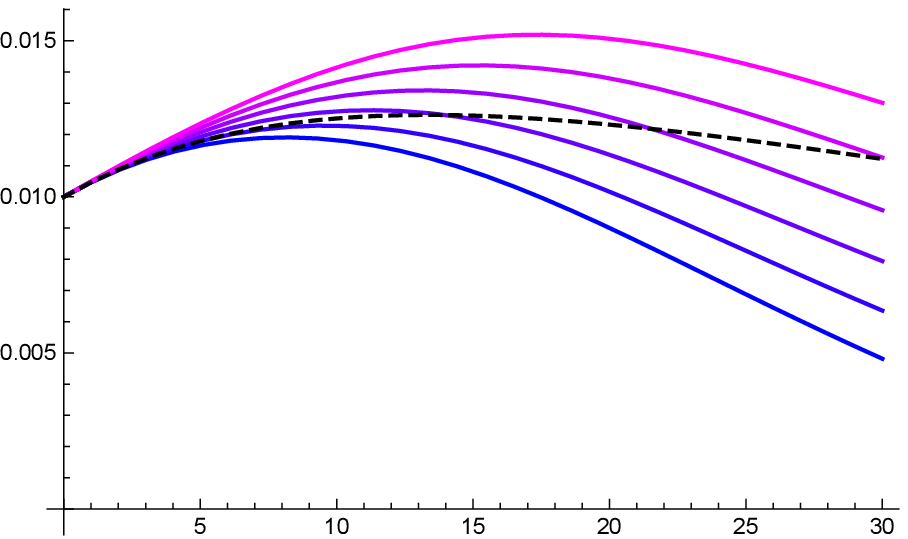}\\
Effect of $\gam$ on $y$ &
Effect of $\nu$ on $y$
\end{tabular}
\caption{
\textit{Left}: Indifference yields $y(t,x,r;T,\nu, \gam)$ as a function of $T$ with $\nu = 1$ fixed and risk aversion varying from $\gam = 0.1$ (\blu{blue}) to $\gam = 0.2$ (\magenta{magenta}).
\textit{Right}: Indifference yields $y(t,x,r;T,\nu, \gam)$ as a function of $T$ with $\gam = 0.15$ and $\nu$ varying from $\nu = -4.5$ (\blu{blue}) to $\nu = 4.5$ (\magenta{magenta}).
\textit{Right and Left}: The dashed black line if the risk-neutral yield curve $\yt(t,x,r;T,\lam)$ with $\lam = 0$ fixed.
}
\label{fig:gam}
\end{figure}

\begin{figure}
\begin{tabular}{cc}
\includegraphics[width=0.48\textwidth]{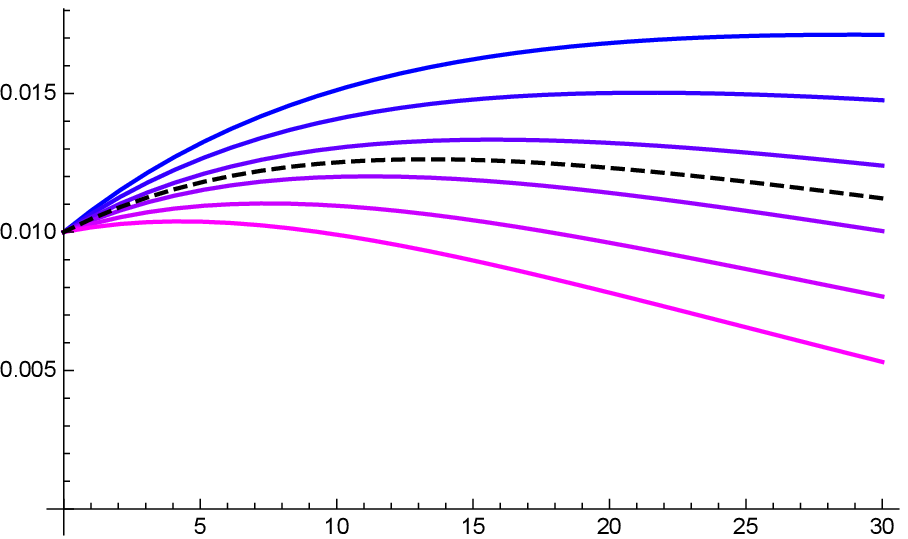}&
\includegraphics[width=0.48\textwidth]{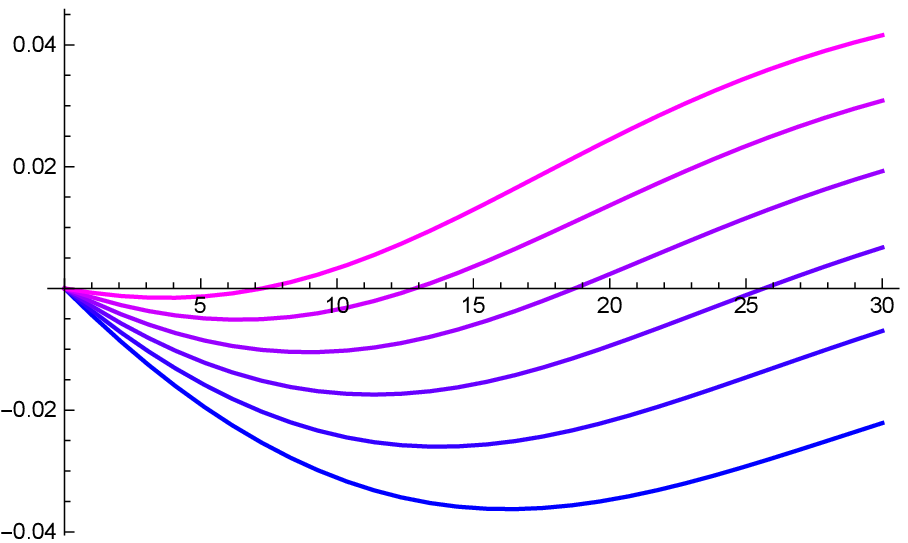}\\
Effect $\lam$ on $\yt$ &
Effect of $\gam$ on $\lam$
\end{tabular}
\caption{
\textit{Left}: Risk-neutral yields $\yt(t,r;T,\lam)$ as a function of $T$ with $\lam$ varying from $\lam = -0.05$ (\blu{blue}) to $\lam = 0.05$ (\magenta{magenta}).
The black-dashed line corresponds to $\lam = 0$.
\textit{Right}: Investor's indifference price of interest rate risk $\lam(t,x,r;T,\nu,\gam)$ as a function of $T$ with $\nu = 1$ fixed and the risk aversion varying from $\gam = 0.1$ (\blu{blue}) to $\gam = 0.2$ (\magenta{magenta}).
}
\label{fig:lambda}
\end{figure}

\end{document}